\newtheorem{lemma}{Lemma}
\begin{document}

\title{On $w$-Optimization of the Split Covariance Intersection Filter}

\author{Hao~Li\\
--------- \\
\small{This preprint note is extracted from the officially published book \cite{li2022FARET} written by the author.}
\thanks{H. Li, Assoc. Prof., is with Dept. Automation and SPEIT, Shanghai Jiao Tong University (SJTU), Shanghai, 200240, China (e-mail: haoli@sjtu.edu.cn)}
}

\maketitle

\begin{abstract}
The split covariance intersection filter (split CIF) is a useful tool for general data fusion and has the potential to be applied in a variety of engineering tasks. An indispensable optimization step (referred to as \textit{w}-optimization) involved in the split CIF concerns the performance and implementation efficiency of the Split CIF, but explanation on \textit{w}-optimization is neglected in the paper \cite{Li2013a} that provides a theoretical foundation for the Split CIF. This note complements \cite{Li2013a} by providing a theoretical proof for the convexity of the \textit{w}-optimization problem involved in the split CIF (convexity is always a desired property for optimization problems as it facilitates optimization considerably).
\end{abstract}


\IEEEpeerreviewmaketitle

\section{Introduction}
The paper \cite{Li2013a} provides a theoretical foundation for the split covariance intersection filter (split CIF). A reference closely related to \cite{Li2013a} is \cite{Julier2001} which presents the Split CIF heuristically without theoretical analysis --- \cite{Julier2001} originally coined it simply as ``split covariance intersection''. In \cite{Li2013a}, the term ``filter'' is added to form an analogy of the Split CIF to the well-known Kalman filter. Although the Split CIF is called ``filter'', it is not limited to temporal recursive estimation but can be used as a pure data fusion method besides the filtering sense, just as the Kalman filter can also be treated as a data fusion method --- The split CIF can reasonably handle both known independent information and unknown correlated information in source data; it is a useful tool for general data fusion and has the potential to be applied in a variety of engineering tasks \cite{Li2013b} \cite{Wanasinghe2014} \cite{Pierre2018} \cite{Chen2020} \cite{Allig2021}. 

An indispensable optimization step (referred to as \textit{w}-optimization) involved in the split CIF concerns the performance and implementation efficiency of the Split CIF; however, explanation on this \textit{w}-optimization problem is neglected in \cite{Li2013a}. As a consequence, readers may find it difficult to follow the split CIF completely as they are not informed of how the \textit{w}-optimization problem can be handled or whether the \textit{w}-optimization problem satisfies certain property (especially convexity) that facilitates optimization. To enable readers to better follow the split CIF and incorporate it into their prospective research works, this note complements \cite{Li2013a} by providing a theoretical proof for the convexity of the \textit{w}-optimization problem involved in the split CIF (convexity is always a desired property for optimization problems as it facilitates optimization considerably).

\section{The \textit{w}-optimization problem} \label{sec:woptprob}

Matrices mentioned in this note are symmetric matrices by default. Given matrices $\mathbf{P}_{1d}$, $\mathbf{P}_{1i}$, $\mathbf{P}_{2d}$, and $\mathbf{P}_{2i}$ that are positive semi-definite, i.e. $\mathbf{P}_{1d} \geq \mathbf{0}$, $\mathbf{P}_{1i} \geq \mathbf{0}$, $\mathbf{P}_{2d} \geq \mathbf{0}$, $\mathbf{P}_{2i} \geq \mathbf{0}$; denotations $\mathbf{P}_{1d}$, $\mathbf{P}_{1i}$, $\mathbf{P}_{2d}$, and $\mathbf{P}_{2i}$ are used for presentation of the Split CIF in \cite{Li2013a}. For $w \in [0,1]$, define
\begin{align} \label{eq:defineP}
\mathbf{P}_{1}(w) &= \mathbf{P}_{1d}/w + \mathbf{P}_{1i} \nonumber \\
\mathbf{P}_{2}(w) &= \mathbf{P}_{2d}/(1-w) + \mathbf{P}_{2i} \nonumber \\
\mathbf{P}(w) &= (\mathbf{P}_{1}(w)^{-1} + \mathbf{P}_{2}(w)^{-1})^{-1}
\end{align}
When $w=0$ or $w=1$, $\mathbf{P}(w)$ denotes the limit value as $w \to 0$ or $w \to 1$ respectively. For $w \in (0,1)$, we further assume that $\mathbf{P}_{1}(w)$ and $\mathbf{P}_{2}(w)$ are positive definite i.e. $\mathbf{P}_{1}(w)>0$, $\mathbf{P}_{2}(w)>0$; in fact, this fair assumption is well rooted in real applications where $\mathbf{P}_{1}(w)$ and $\mathbf{P}_{2}(w)$ normally correspond to covariances of certain estimates and hence are always positive definite. With this assumption, we naturally have $\mathbf{P}(w)>0$.

The \textit{w}-optimization problem involved in the split CIF \cite{Li2013a} can be formalized as follows:

\begin{equation}  \label{eq:wopt}
w = \arg \min_{w \in [0,1]} \det(\mathbf{P}(w))
\end{equation}

\section{Convexity of the \textit{w}-optimization problem}

We provide a theoretical proof for the convexity of the \textit{w}-optimization problem formalized in the previous section. This is equivalent to proving that the second-order differential of $\det(\mathbf{P}(w))$ in (\ref{eq:wopt}) is always non-negative for $w \in (0,1)$:
\begin{equation} \label{eq:convexC1}
\frac{d^2}{dw^2} \det(\mathbf{P}(w)) \geq 0
\end{equation} 
Note that
\begin{align*}
&\frac{d^2}{dw^2} \ln \det(\mathbf{P}(w)) \\
&= \frac{\det(\mathbf{P}(w)) \frac{d^2}{dw^2} \det(\mathbf{P}(w))-(\frac{d}{dw} \det(\mathbf{P}(w)))^2}{\det(\mathbf{P}(w))^2} \\
&\leq \frac{\frac{d^2}{dw^2} \det(\mathbf{P}(w))}{\det(\mathbf{P}(w))}
\end{align*}
So if the following inequality (\ref{eq:convexC2}) is proved, then (\ref{eq:convexC1}) holds true as well.
\begin{equation} \label{eq:convexC2}
\frac{d^2}{dw^2} \ln \det(\mathbf{P}(w)) \geq 0
\end{equation}

A detailed theoretical proof for (\ref{eq:convexC2}) is given below. For denotation conciseness in the following proof, we omit explicit writing of ``$(w)$'' for $w$-parameterized variables; for example, we denote above mentioned $\mathbf{P}_{1}(w)$, $\mathbf{P}_{2}(w)$, and $\mathbf{P}(w)$ simply as $\mathbf{P}_{1}$, $\mathbf{P}_{2}$, and $\mathbf{P}$.

\begin{lemma} \label{lm:diff1}
Given a first-order differentiable $w$-parameterized matrix $\mathbf{M}(w)$ (denoted shortly as $\mathbf{M}$) satisfying $\mathbf{M}(w)>0$, we have
\begin{equation*}
\frac{d}{dw} \ln \det(\mathbf{M}) = tr\{\mathbf{M}^{-1} \frac{d \mathbf{M}}{dw} \}
\end{equation*}
\end{lemma}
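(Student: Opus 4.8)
The plan is to establish the standard matrix-calculus identity
\[
\frac{d}{dw} \ln \det(\mathbf{M}) = tr\{\mathbf{M}^{-1} \tfrac{d\mathbf{M}}{dw}\}
\]
by combining two well-known facts: Jacobi's formula for the derivative of a determinant, and the fact that $\ln \det$ is the composition of $\ln$ with $\det$, so the chain rule applies. First I would recall Jacobi's formula, which states that for a differentiable matrix $\mathbf{M}(w)$,
\[
\frac{d}{dw} \det(\mathbf{M}) = tr\{\mathrm{adj}(\mathbf{M}) \tfrac{d\mathbf{M}}{dw}\},
\]
where $\mathrm{adj}(\mathbf{M})$ is the adjugate. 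Since $\mathbf{M}(w)>0$ is invertible, we have $\mathrm{adj}(\mathbf{M}) = \det(\mathbf{M}) \, \mathbf{M}^{-1}$, which turns Jacobi's formula into
\[
\frac{d}{dw} \det(\mathbf{M}) = \det(\mathbf{M}) \, tr\{\mathbf{M}^{-1} \tfrac{d\mathbf{M}}{dw}\}.
\]

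Next I would apply the scalar chain rule to $\ln \det(\mathbf{M})$. Writing $f(w) = \det(\mathbf{M}(w))$, which is strictly positive by the assumption $\mathbf{M}>0$, the derivative of $\ln f$ is $f'/f$, and substituting the expression just obtained for $f'$ gives
\[
\frac{d}{dw} \ln \det(\mathbf{M}) = \frac{\det(\mathbf{M}) \, tr\{\mathbf{M}^{-1} \tfrac{d\mathbf{M}}{dw}\}}{\det(\mathbf{M})} = tr\{\mathbf{M}^{-1} \tfrac{d\mathbf{M}}{dw}\},
\]
which is exactly the claimed identity. The positivity hypothesis $\mathbf{M}>0$ is used twice: to guarantee invertibility (so the adjugate–inverse relation holds) and to guarantee $\det(\mathbf{M})>0$ (so $\ln\det$ is well defined and differentiable).

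Alternatively, if one prefers a self-contained derivation rather than citing Jacobi's formula, I would proceed from first principles via the multilinearity of the determinant in its columns, differentiating column by column and recognizing the resulting sum as a trace; or, more slickly, use the first-order expansion $\det(\mathbf{I} + \epsilon \mathbf{A}) = 1 + \epsilon \, tr\{\mathbf{A}\} + O(\epsilon^2)$ applied to $\det(\mathbf{M} + \epsilon\,\tfrac{d\mathbf{M}}{dw}) = \det(\mathbf{M})\det(\mathbf{I} + \epsilon\,\mathbf{M}^{-1}\tfrac{d\mathbf{M}}{dw})$. I do not anticipate a genuine obstacle here, as this is a textbook identity; the only point requiring care is ensuring that the positivity assumption is invoked to justify both the invertibility of $\mathbf{M}$ and the well-definedness of the logarithm, rather than treating the identity as purely formal. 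This lemma is evidently a stepping stone toward the second-order computation needed for (\ref{eq:convexC2}), so keeping the statement in the clean trace form is what makes the subsequent differentiation tractable.
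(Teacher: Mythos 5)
Your proof is correct and follows essentially the same route as the paper: invoke Jacobi's formula (the paper cites it directly in the inverse form $\frac{d}{dw}\det(\mathbf{M}) = \det(\mathbf{M})\,tr\{\mathbf{M}^{-1}\frac{d\mathbf{M}}{dw}\}$, while you derive that form from the adjugate version) and then apply the scalar chain rule to $\ln\det(\mathbf{M})$, using $\mathbf{M}>0$ to ensure $\det(\mathbf{M})>0$. The extra care you take in justifying the adjugate--inverse step and the positivity of the determinant is a minor elaboration, not a different argument.
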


\begin{proof}
According to the Jacobi's formula \cite{Horn1991}
\begin{equation*}
\frac{d}{dw} \det(\mathbf{M}) = \det(\mathbf{M}) tr\{\mathbf{M}^{-1} \frac{d \mathbf{M}}{dw} \}
\end{equation*}
Thus we have
\begin{equation*}
\frac{d}{dw} \ln \det(\mathbf{M}) = \frac{1}{\det(\mathbf{M})} \frac{d}{dw} \det(\mathbf{M}) = tr\{\mathbf{M}^{-1} \frac{d \mathbf{M}}{dw} \}
\end{equation*}
\end{proof}

\begin{lemma} \label{lm:diff2}
Given a second-order differentiable matrix $\mathbf{M}(w)$ satisfying $\mathbf{M}(w)>0$, we have
\begin{equation*}
\frac{d^2}{dw^2} \ln \det(\mathbf{M}) = tr\{-\mathbf{M}^{-1} \frac{d \mathbf{M}}{dw} \mathbf{M}^{-1} \frac{d \mathbf{M}}{dw} + \mathbf{M}^{-1} \frac{d^2 \mathbf{M}}{dw^2} \}
\end{equation*}
\end{lemma}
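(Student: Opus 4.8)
The plan is to build directly on Lemma~\ref{lm:diff1}, which already supplies the first derivative $\frac{d}{dw}\ln\det(\mathbf{M}) = tr\{\mathbf{M}^{-1}\frac{d\mathbf{M}}{dw}\}$. The desired second derivative is then obtained simply by differentiating this expression once more with respect to $w$. Since the trace is a linear operator, differentiation commutes with it, so I can pull $\frac{d}{dw}$ inside the trace and apply the Leibniz product rule to the matrix product $\mathbf{M}^{-1}\frac{d\mathbf{M}}{dw}$. This yields two terms: one in which the derivative hits $\mathbf{M}^{-1}$, and one in which it hits $\frac{d\mathbf{M}}{dw}$ to produce $\frac{d^2\mathbf{M}}{dw^2}$.

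The single genuine ingredient beyond routine manipulation is the formula for the derivative of the matrix inverse, $\frac{d}{dw}(\mathbf{M}^{-1}) = -\mathbf{M}^{-1}\frac{d\mathbf{M}}{dw}\mathbf{M}^{-1}$. I would establish this by differentiating the identity $\mathbf{M}\mathbf{M}^{-1} = \mathbf{I}$, whose right-hand side is constant. The product rule gives $\frac{d\mathbf{M}}{dw}\mathbf{M}^{-1} + \mathbf{M}\frac{d}{dw}(\mathbf{M}^{-1}) = \mathbf{0}$, and left-multiplying by $\mathbf{M}^{-1}$ (which exists because $\mathbf{M}>0$) isolates $\frac{d}{dw}(\mathbf{M}^{-1})$ in the required form. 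The assumption $\mathbf{M}(w)>0$ guarantees invertibility throughout, and second-order differentiability of $\mathbf{M}$ ensures that $\frac{d^2\mathbf{M}}{dw^2}$ exists and that the first-order quantities are themselves differentiable.

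Substituting the inverse-derivative formula into the first of the two trace terms produces $-\mathbf{M}^{-1}\frac{d\mathbf{M}}{dw}\mathbf{M}^{-1}\frac{d\mathbf{M}}{dw}$, while the second term is $\mathbf{M}^{-1}\frac{d^2\mathbf{M}}{dw^2}$. Collecting these under a single trace reproduces the claimed identity exactly. I do not anticipate a serious obstacle here: the argument is a short computation, and the only place where care is warranted is keeping the non-commutative matrix factors in their correct left-to-right order when applying the product rule and the inverse-derivative formula, since swapping factors would silently corrupt the result even though the final expression sits inside a trace.
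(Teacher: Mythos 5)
Your proposal is correct and follows essentially the same route as the paper's proof: differentiate the identity from Lemma~\ref{lm:diff1}, pull the derivative inside the trace, and apply the product rule together with $\frac{d}{dw}(\mathbf{M}^{-1}) = -\mathbf{M}^{-1}\frac{d\mathbf{M}}{dw}\mathbf{M}^{-1}$. The only difference is cosmetic: you derive the inverse-derivative formula from $\mathbf{M}\mathbf{M}^{-1}=\mathbf{I}$, whereas the paper simply cites it from a reference.
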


\begin{proof}
Note that the differential of a matrix inverse can be computed as follows \cite{Horn1991}:
\begin{equation*}
\frac{d \mathbf{M}^{-1}}{dw} = -\mathbf{M}^{-1} \frac{d \mathbf{M}}{dw} \mathbf{M}^{-1}
\end{equation*}
Following \textbf{Lemma}.\ref{lm:diff1} we have
\begin{align*}
&\frac{d^2}{dw^2} \ln \det(\mathbf{M}) = \frac{d}{dw} tr\{\mathbf{M}^{-1} \frac{d \mathbf{M}}{dw} \}
= tr\{\frac{d}{dw} (\mathbf{M}^{-1} \frac{d \mathbf{M}}{dw}) \} \\
&= tr\{\frac{d \mathbf{M}^{-1}}{dw} \frac{d \mathbf{M}}{dw} + \mathbf{M}^{-1} \frac{d^2 \mathbf{M}}{dw^2} \} \\
&= tr\{-\mathbf{M}^{-1} \frac{d \mathbf{M}}{dw} \mathbf{M}^{-1} \frac{d \mathbf{M}}{dw} + \mathbf{M}^{-1} \frac{d^2 \mathbf{M}}{dw^2} \}
\end{align*}
\end{proof}

Following \textbf{Lemma}.\ref{lm:diff2} we can compute the second-order differential of $\ln \det(\mathbf{P}(w))$ as follows
\begin{align} \label{eq:diflndetP}
&\frac{d^2}{dw^2} \ln \det \mathbf{P} = \frac{d^2}{dw^2} \ln \det ((\mathbf{P}_1^{-1} + \mathbf{P}_2^{-1})^{-1}) \nonumber \\ 
&= \frac{d^2}{dw^2} \ln \det \mathbf{P}_1 + \frac{d^2}{dw^2} \ln \det \mathbf{P}_2 - \frac{d^2}{dw^2} \ln \det (\mathbf{P}_1 + \mathbf{P}_2) \nonumber \\ 
&= tr\{-\mathbf{P}_1^{-1} \frac{d \mathbf{P}_1}{dw} \mathbf{P}_1^{-1} \frac{d \mathbf{P}_1}{dw} + \mathbf{P}_1^{-1} \frac{d^2 \mathbf{P}_1}{dw^2} \} \nonumber \\
&+ tr\{-\mathbf{P}_2^{-1} \frac{d \mathbf{P}_2}{dw} \mathbf{P}_2^{-1} \frac{d \mathbf{P}_2}{dw} + \mathbf{P}_2^{-1} \frac{d^2 \mathbf{P}_2}{dw^2} \} \nonumber \\
&-tr\{-(\mathbf{P}_1+\mathbf{P}_2)^{-1} \frac{d (\mathbf{P}_1+\mathbf{P}_2)}{dw} (\mathbf{P}_1+\mathbf{P}_2)^{-1} \frac{d (\mathbf{P}_1+\mathbf{P}_2)}{dw} \nonumber \\
&~~~~~+ (\mathbf{P}_1+\mathbf{P}_2)^{-1} \frac{d^2 (\mathbf{P}_1+\mathbf{P}_2)}{dw^2} \} 
\end{align}

\begin{lemma} \label{lm:trcyclic}
Given two matrices $\mathbf{M}_1$ and $\mathbf{M}_2$ whose dimensions are consistent with each other for multiplication $\mathbf{M}_1 \mathbf{M}_2$ and $\mathbf{M}_2 \mathbf{M}_1$, we have $tr\{\mathbf{M}_1 \mathbf{M}_2\} = tr\{\mathbf{M}_2 \mathbf{M}_1\}$.
\end{lemma}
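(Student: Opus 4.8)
The plan is to prove this identity by direct expansion from the definitions of the trace and of matrix multiplication, since the cyclic property of the trace requires no structure beyond elementary index manipulation. First I would fix the dimensions explicitly: for both products $\mathbf{M}_1 \mathbf{M}_2$ and $\mathbf{M}_2 \mathbf{M}_1$ to be well-defined, $\mathbf{M}_1$ must be $m \times n$ and $\mathbf{M}_2$ must be $n \times m$ for some positive integers $m, n$, so that $\mathbf{M}_1 \mathbf{M}_2$ is $m \times m$ and $\mathbf{M}_2 \mathbf{M}_1$ is $n \times n$. A point worth flagging is that the two products need not even have the same size, yet their traces will still coincide.

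Next I would write each trace as the sum of its diagonal entries and expand each diagonal entry using the definition of matrix multiplication. This gives $tr\{\mathbf{M}_1 \mathbf{M}_2\} = \sum_{i=1}^{m} (\mathbf{M}_1 \mathbf{M}_2)_{ii} = \sum_{i=1}^{m} \sum_{j=1}^{n} (\mathbf{M}_1)_{ij} (\mathbf{M}_2)_{ji}$, and symmetrically $tr\{\mathbf{M}_2 \mathbf{M}_1\} = \sum_{j=1}^{n} (\mathbf{M}_2 \mathbf{M}_1)_{jj} = \sum_{j=1}^{n} \sum_{i=1}^{m} (\mathbf{M}_2)_{ji} (\mathbf{M}_1)_{ij}$.

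The final step is to observe that these two double sums are equal. The matrix entries are scalars, so each summand satisfies $(\mathbf{M}_1)_{ij} (\mathbf{M}_2)_{ji} = (\mathbf{M}_2)_{ji} (\mathbf{M}_1)_{ij}$, and both sums range over exactly the same finite index set $\{(i,j) : 1 \leq i \leq m,\ 1 \leq j \leq n\}$. Interchanging the order of summation therefore leaves the total unchanged, which yields $tr\{\mathbf{M}_1 \mathbf{M}_2\} = tr\{\mathbf{M}_2 \mathbf{M}_1\}$.

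I do not expect any genuine obstacle in this argument; it is a one-line computation once the sums are written out. The only point deserving a little care is the bookkeeping of the index ranges, precisely because the two products have different sizes, so I would make explicit that it is the combination of entrywise scalar commutativity and the symmetry of the finite double sum over a common index set that drives the conclusion.
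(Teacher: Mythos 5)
Your proof is correct. Note, however, that the paper itself does not prove this lemma at all: it simply states the result and defers to the reference \cite{Horn1990} (Horn and Johnson's \emph{Matrix Analysis}), so there is no in-paper argument to compare against step by step. Your index-expansion argument is the standard self-contained proof of exactly the fact the paper imports: writing
\begin{equation*}
tr\{\mathbf{M}_1 \mathbf{M}_2\} = \sum_{i=1}^{m} \sum_{j=1}^{n} (\mathbf{M}_1)_{ij} (\mathbf{M}_2)_{ji}, \qquad tr\{\mathbf{M}_2 \mathbf{M}_1\} = \sum_{j=1}^{n} \sum_{i=1}^{m} (\mathbf{M}_2)_{ji} (\mathbf{M}_1)_{ij},
\end{equation*}
and equating the two double sums by scalar commutativity and exchange of the order of finite summation. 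Your attention to the rectangular case ($\mathbf{M}_1$ of size $m \times n$, $\mathbf{M}_2$ of size $n \times m$, so the two products have different sizes $m \times m$ and $n \times n$) is exactly the right level of care, since the lemma as stated does not assume square matrices, and this generality is genuinely used nowhere in the paper but costs nothing to keep. What your version buys over the paper's citation is self-containedness; what the citation buys is brevity and a pointer to the broader cyclic property $tr\{\mathbf{M}_1 \mathbf{M}_2 \cdots \mathbf{M}_k\} = tr\{\mathbf{M}_k \mathbf{M}_1 \cdots \mathbf{M}_{k-1}\}$, which the paper states immediately after the lemma and uses in its later derivations, and which follows from your two-matrix case by grouping $\mathbf{M}_1$ against the product $\mathbf{M}_2 \cdots \mathbf{M}_k$.
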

The proof for \textbf{Lemma}.\ref{lm:trcyclic} can be found in \cite{Horn1990}. More generally, given matrices $\mathbf{M}_1$, $\mathbf{M}_2$, and $\mathbf{M}_k$, we have 
\begin{align*}
&tr\{\mathbf{M}_1 \mathbf{M}_2 ... \mathbf{M}_k\} = tr\{\mathbf{M}_2 \mathbf{M}_3 ... \mathbf{M}_k \mathbf{M}_1\} \\
&~~~~= ... = tr\{\mathbf{M}_k \mathbf{M}_1 ... \mathbf{M}_{k-2} \mathbf{M}_{k-1}\}
\end{align*}
which is called \textit{cyclic property} of trace operation.

Define $\mathbf{D}_1 (w) = \mathbf{P}_{1d}/w$ and $\mathbf{D}_2 (w) = \mathbf{P}_{2d}/(1-w)$ for $w \in (0,1)$. As $\mathbf{P}_{1d} \geq 0$ and $\mathbf{P}_{2d} \geq 0$, we also have $\mathbf{D}_1 \geq 0$, $\mathbf{D}_2 \geq 0$. Like $\mathbf{P}_{1d}$ and $\mathbf{P}_{2d}$, $\mathbf{D}_1$ and $\mathbf{D}_2$ are also symmetric matrices. From definitions given in (\ref{eq:defineP}) we have
\begin{align*}
&\frac{d \mathbf{P}_1}{dw} = -\frac{\mathbf{D}_1}{w}~~~~~~\frac{d \mathbf{P}_2}{dw} = \frac{\mathbf{D}_2}{1-w} \\
&\frac{d^2 \mathbf{P}_1}{dw^2} = \frac{2 \mathbf{D}_1}{w^2}~~~~~~\frac{d^2 \mathbf{P}_2}{dw^2} = \frac{2 \mathbf{D}_2}{(1-w)^2}
\end{align*}
Substitute above formulas into (\ref{eq:diflndetP}) and use \textbf{Lemma}.\ref{lm:trcyclic} (the cyclic property of trace operation) when necessary in following derivation, we have
\begin{align} \label{eq:diflndetP-2}
&\frac{d^2}{dw^2} \ln \det \mathbf{P} = tr\{-\mathbf{P}_1^{-1} (-\frac{\mathbf{D}_1}{w}) \mathbf{P}_1^{-1} (-\frac{\mathbf{D}_1}{w}) + \mathbf{P}_1^{-1} \frac{2 \mathbf{D}_1}{w^2} \nonumber \\
&~-\mathbf{P}_2^{-1} (\frac{\mathbf{D}_2}{1-w}) \mathbf{P}_2^{-1} (\frac{\mathbf{D}_2}{1-w}) + \mathbf{P}_2^{-1} \frac{2 \mathbf{D}_2}{(1-w)^2} \nonumber \\
&~+(\mathbf{P}_1+\mathbf{P}_2)^{-1} (\frac{\mathbf{D}_2}{1-w}-\frac{\mathbf{D}_1}{w}) (\mathbf{P}_1+\mathbf{P}_2)^{-1} (\frac{\mathbf{D}_2}{1-w}-\frac{\mathbf{D}_1}{w}) \nonumber \\
&~- (\mathbf{P}_1+\mathbf{P}_2)^{-1} (\frac{2 \mathbf{D}_1}{w^2}+\frac{2 \mathbf{D}_2}{(1-w)^2}) \} \nonumber \\
&= \frac{1}{w^2} \mathbf{T}_1 + \frac{1}{(1-w)^2} \mathbf{T}_2 - \frac{2}{w(1-w)} \mathbf{T}_3
\end{align}
where
\begin{align*}
&\mathbf{T}_1 = tr\{2 \mathbf{P}_1^{-1} \mathbf{D}_1 - 2 (\mathbf{P}_1+\mathbf{P}_2)^{-1} \mathbf{D}_1 - \mathbf{P}_1^{-1} \mathbf{D}_1 \mathbf{P}_1^{-1} \mathbf{D}_1 \\
&~~~~~~+ (\mathbf{P}_1+\mathbf{P}_2)^{-1} \mathbf{D}_1 (\mathbf{P}_1+\mathbf{P}_2)^{-1} \mathbf{D}_1 \} \\
&\mathbf{T}_2 = tr\{2 \mathbf{P}_2^{-1} \mathbf{D}_2 - 2 (\mathbf{P}_1+\mathbf{P}_2)^{-1} \mathbf{D}_2 - \mathbf{P}_2^{-1} \mathbf{D}_2 \mathbf{P}_2^{-1} \mathbf{D}_2 \\
&~~~~~~+ (\mathbf{P}_1+\mathbf{P}_2)^{-1} \mathbf{D}_2 (\mathbf{P}_1+\mathbf{P}_2)^{-1} \mathbf{D}_2 \} \\
&\mathbf{T}_3 = tr\{(\mathbf{P}_1+\mathbf{P}_2)^{-1} \mathbf{D}_1 (\mathbf{P}_1+\mathbf{P}_2)^{-1} \mathbf{D}_2\}
\end{align*}

\begin{lemma} \label{lm:ineq1}
Given two positive semi-definite matrices $\mathbf{M}_1$ and $\mathbf{M}_2$ (i.e. $\mathbf{M}_1 \geq 0$, $\mathbf{M}_2 \geq 0$), we have $tr\{\mathbf{M}_1 \mathbf{M}_2\} = tr\{\mathbf{M}_2 \mathbf{M}_1\} \geq 0$.
\end{lemma}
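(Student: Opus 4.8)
The plan is to first dispose of the equality $tr\{\mathbf{M}_1 \mathbf{M}_2\} = tr\{\mathbf{M}_2 \mathbf{M}_1\}$, which is immediate from the cyclic property of the trace already established in \textbf{Lemma}.\ref{lm:trcyclic}. The real content of the statement is therefore the inequality $tr\{\mathbf{M}_1 \mathbf{M}_2\} \geq 0$, and this is where I would direct the effort.

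For the inequality, I would exploit the fact that a symmetric positive semi-definite matrix admits a symmetric positive semi-definite square root (a consequence of the spectral theorem). Since $\mathbf{M}_1 \geq 0$, write $\mathbf{M}_1 = \mathbf{M}_1^{1/2} \mathbf{M}_1^{1/2}$ with $\mathbf{M}_1^{1/2}$ symmetric and $\mathbf{M}_1^{1/2} \geq 0$. Applying the cyclic property once more then gives
\[
tr\{\mathbf{M}_1 \mathbf{M}_2\} = tr\{\mathbf{M}_1^{1/2} \mathbf{M}_1^{1/2} \mathbf{M}_2\} = tr\{\mathbf{M}_1^{1/2} \mathbf{M}_2 \mathbf{M}_1^{1/2}\}.
\]
The key step is to recognize that the symmetrized matrix $\mathbf{M}_1^{1/2} \mathbf{M}_2 \mathbf{M}_1^{1/2}$ is itself positive semi-definite: for any vector $\mathbf{x}$, setting $\mathbf{y} = \mathbf{M}_1^{1/2} \mathbf{x}$ (using the symmetry of $\mathbf{M}_1^{1/2}$) yields $\mathbf{x}^T \mathbf{M}_1^{1/2} \mathbf{M}_2 \mathbf{M}_1^{1/2} \mathbf{x} = \mathbf{y}^T \mathbf{M}_2 \mathbf{y} \geq 0$, since $\mathbf{M}_2 \geq 0$. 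Finally, the trace of a positive semi-definite matrix is the sum of its eigenvalues, all non-negative, so $tr\{\mathbf{M}_1^{1/2} \mathbf{M}_2 \mathbf{M}_1^{1/2}\} \geq 0$, which completes the chain.

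I anticipate no serious obstacle, as the result is classical; the only points requiring care are the existence of the symmetric square root and the explicit use of its symmetry when forming $\mathbf{y}$, so that the quadratic form collapses exactly to $\mathbf{y}^T \mathbf{M}_2 \mathbf{y}$. As a backup presentation that avoids square roots entirely, I would keep the spectral decomposition $\mathbf{M}_1 = \sum_i \lambda_i \mathbf{v}_i \mathbf{v}_i^T$ with $\lambda_i \geq 0$, which gives $tr\{\mathbf{M}_1 \mathbf{M}_2\} = \sum_i \lambda_i \, \mathbf{v}_i^T \mathbf{M}_2 \mathbf{v}_i \geq 0$ term by term; this alternative makes the non-negativity transparent and could be substituted if the square-root route is deemed to require too much justification.
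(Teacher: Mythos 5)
Your proof is correct, but note that the paper does not actually prove this lemma at all: it simply states that the proof ``can be found in \cite{Horn1990}'' and moves on. So your argument is not a variant of the paper's proof --- it is the self-contained proof that the paper delegates to a textbook reference. Both of your routes are sound. In the main route, the equality $tr\{\mathbf{M}_1 \mathbf{M}_2\} = tr\{\mathbf{M}_2 \mathbf{M}_1\}$ is indeed immediate from \textbf{Lemma}.\ref{lm:trcyclic}; the existence of a symmetric positive semi-definite square root $\mathbf{M}_1^{1/2}$ is guaranteed by the spectral theorem; the cyclic property gives $tr\{\mathbf{M}_1 \mathbf{M}_2\} = tr\{\mathbf{M}_1^{1/2} \mathbf{M}_2 \mathbf{M}_1^{1/2}\}$; and the congruence argument $\mathbf{x}^T \mathbf{M}_1^{1/2} \mathbf{M}_2 \mathbf{M}_1^{1/2} \mathbf{x} = \mathbf{y}^T \mathbf{M}_2 \mathbf{y} \geq 0$ correctly establishes that the trace is taken of a positive semi-definite matrix, hence is non-negative. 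This congruence step is, incidentally, exactly the same device the paper uses later (in the proof of \textbf{Lemma}.\ref{lm:ineq2}, where $\mathbf{Z}^T (\mathbf{Z}^{-1}-\mathbf{X}^{-1}) \mathbf{Z} \geq 0$, and in the final step where $\mathbf{C}^T \mathbf{B}_3^{-1} \mathbf{C} \geq 0$), so your proof blends stylistically with the rest of the note. Your backup via the spectral decomposition $\mathbf{M}_1 = \sum_i \lambda_i \mathbf{v}_i \mathbf{v}_i^T$, giving $tr\{\mathbf{M}_1 \mathbf{M}_2\} = \sum_i \lambda_i \, \mathbf{v}_i^T \mathbf{M}_2 \mathbf{v}_i \geq 0$, is equally correct and slightly more economical, since it needs only the decomposition itself and not the additional fact that the square root can be chosen symmetric. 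The trade-off is simply one of length versus self-containment: the paper's citation keeps the note short, while either of your arguments would make the lemma verifiable without consulting \cite{Horn1990}.
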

The proof for \textbf{Lemma}.\ref{lm:ineq1} can be found in \cite{Horn1990}.

\begin{lemma} \label{lm:ineq2}
Given symmetric matrices $\mathbf{X}$, $\mathbf{Y}$, and $\mathbf{Z}$ satisfying $0 < \mathbf{X} \leq \mathbf{Y}$ and $0 \leq \mathbf{Z} \leq \mathbf{X}$, we have
\begin{align*}
&tr\{ 2 \mathbf{X}^{-1} \mathbf{Z} - 2 \mathbf{Y}^{-1} \mathbf{Z} - \mathbf{X}^{-1} \mathbf{Z} \mathbf{X}^{-1} \mathbf{Z} + \mathbf{Y}^{-1} \mathbf{Z} \mathbf{Y}^{-1} \mathbf{Z} \} \\
&~~~~\geq tr\{ (\mathbf{X}^{-1}-\mathbf{Y}^{-1}) \mathbf{Z} (\mathbf{X}^{-1}-\mathbf{Y}^{-1}) \mathbf{Z} \}
\end{align*} 
\end{lemma}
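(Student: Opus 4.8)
The plan is to reduce the claimed trace inequality to a statement that the trace of a product of two positive semi-definite factors is non-negative, so that Lemma~\ref{lm:ineq1} applies directly. First I would lighten the notation by setting $\mathbf{A} = \mathbf{X}^{-1}$, $\mathbf{B} = \mathbf{Y}^{-1}$, and $\mathbf{C} = \mathbf{A} - \mathbf{B}$. Because inversion reverses the order of positive definite matrices, the hypothesis $0 < \mathbf{X} \leq \mathbf{Y}$ yields $\mathbf{A} \geq \mathbf{B} > 0$, hence $\mathbf{C} \geq 0$; and the hypothesis $\mathbf{Z} \leq \mathbf{X}$ reads $\mathbf{Z} \leq \mathbf{A}^{-1}$, while $\mathbf{Z} \geq 0$ is retained.

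Next I would expand the right-hand side as $tr\{(\mathbf{A}-\mathbf{B})\mathbf{Z}(\mathbf{A}-\mathbf{B})\mathbf{Z}\} = tr\{\mathbf{A}\mathbf{Z}\mathbf{A}\mathbf{Z} - 2\mathbf{A}\mathbf{Z}\mathbf{B}\mathbf{Z} + \mathbf{B}\mathbf{Z}\mathbf{B}\mathbf{Z}\}$, where the two cross terms are merged by the cyclic property (Lemma~\ref{lm:trcyclic}). Subtracting this from the left-hand side and again invoking the cyclic property, the $\mathbf{B}\mathbf{Z}\mathbf{B}\mathbf{Z}$ and $\mathbf{A}\mathbf{Z}\mathbf{A}\mathbf{Z}$ terms telescope and the difference collapses to $2\,tr\{\mathbf{C}\mathbf{Z} - \mathbf{A}\mathbf{Z}\mathbf{C}\mathbf{Z}\} = 2\,tr\{\mathbf{C}(\mathbf{Z} - \mathbf{Z}\mathbf{A}\mathbf{Z})\}$. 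Thus the lemma is equivalent to $tr\{\mathbf{C}(\mathbf{Z} - \mathbf{Z}\mathbf{A}\mathbf{Z})\} \geq 0$.

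The decisive step is to establish $\mathbf{Z} - \mathbf{Z}\mathbf{A}\mathbf{Z} \geq 0$. I would introduce $\mathbf{W} = \mathbf{A}^{1/2}\mathbf{Z}\mathbf{A}^{1/2} \geq 0$; applying the congruence by $\mathbf{A}^{1/2}$ to the hypothesis $\mathbf{A}^{-1} - \mathbf{Z} \geq 0$ gives $\mathbf{I} - \mathbf{W} \geq 0$, so $0 \leq \mathbf{W} \leq \mathbf{I}$. A direct substitution then shows $\mathbf{Z} - \mathbf{Z}\mathbf{A}\mathbf{Z} = \mathbf{A}^{-1/2}(\mathbf{W} - \mathbf{W}^2)\mathbf{A}^{-1/2}$, and since $\mathbf{W}$ and $\mathbf{I} - \mathbf{W}$ are commuting positive semi-definite matrices, $\mathbf{W} - \mathbf{W}^2 = \mathbf{W}(\mathbf{I} - \mathbf{W}) \geq 0$; congruence by $\mathbf{A}^{-1/2}$ preserves positive semi-definiteness, giving $\mathbf{Z} - \mathbf{Z}\mathbf{A}\mathbf{Z} \geq 0$. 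Finally, with $\mathbf{C} \geq 0$ and $\mathbf{Z} - \mathbf{Z}\mathbf{A}\mathbf{Z} \geq 0$, Lemma~\ref{lm:ineq1} yields $tr\{\mathbf{C}(\mathbf{Z} - \mathbf{Z}\mathbf{A}\mathbf{Z})\} \geq 0$, which completes the argument.

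I expect the main obstacle to be precisely the positive semi-definiteness of $\mathbf{Z} - \mathbf{Z}\mathbf{A}\mathbf{Z}$: the naive rewriting $\mathbf{C}\mathbf{Z}(\mathbf{I} - \mathbf{A}\mathbf{Z})$ is not a product of two symmetric positive semi-definite factors, so the symmetrizing congruence by $\mathbf{A}^{1/2}$ is needed to bring it into a form where Lemma~\ref{lm:ineq1} is applicable. The preceding trace algebra that produces the cancellation down to $tr\{\mathbf{C}(\mathbf{Z} - \mathbf{Z}\mathbf{A}\mathbf{Z})\}$ is routine, but must be carried out carefully, keeping track of which terms are identified through the cyclic property.
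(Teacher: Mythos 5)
Your proof is correct, and its overall skeleton coincides with the paper's: expand the difference of the two sides, merge the cross terms with the cyclic property (Lemma~\ref{lm:trcyclic}), and reduce the claim to the non-negativity of the trace of a product of two symmetric positive semi-definite matrices, settled by Lemma~\ref{lm:ineq1}. Indeed your $2\,tr\{\mathbf{C}(\mathbf{Z}-\mathbf{Z}\mathbf{A}\mathbf{Z})\}$ is exactly the paper's $2\,tr\{\mathbf{Z}(\mathbf{I}-\mathbf{X}^{-1}\mathbf{Z})(\mathbf{X}^{-1}-\mathbf{Y}^{-1})\}$ after a cyclic shift. Where you genuinely depart from the paper is the key positivity step. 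The paper establishes $\mathbf{Z}-\mathbf{Z}\mathbf{X}^{-1}\mathbf{Z}\geq 0$ by writing it as the congruence $\mathbf{Z}^{T}(\mathbf{Z}^{-1}-\mathbf{X}^{-1})\mathbf{Z}$ with $\mathbf{Z}^{-1}-\mathbf{X}^{-1}\geq 0$; this tacitly assumes $\mathbf{Z}$ is invertible, even though the lemma only hypothesizes $\mathbf{Z}\geq 0$ --- and in the paper's own application $\mathbf{Z}$ is instantiated as $\mathbf{D}_1$ or $\mathbf{D}_2$, which are merely positive semi-definite, so the singular case is actually needed (the paper's argument can be repaired by perturbing $\mathbf{Z}\to\mathbf{Z}+\epsilon\mathbf{I}$ and letting $\epsilon\to 0$, but this is left unsaid). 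Your route via $\mathbf{W}=\mathbf{X}^{-1/2}\mathbf{Z}\mathbf{X}^{-1/2}$, the bound $0\leq\mathbf{W}\leq\mathbf{I}$, and the identity $\mathbf{Z}-\mathbf{Z}\mathbf{X}^{-1}\mathbf{Z}=\mathbf{X}^{1/2}(\mathbf{W}-\mathbf{W}^{2})\mathbf{X}^{1/2}$ never inverts $\mathbf{Z}$, so it covers the singular case directly. What the paper's version buys is brevity and avoidance of matrix square roots; what yours buys is rigor at precisely the point where the lemma's hypotheses are weakest, which is the form one would keep if the note were revised.
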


\begin{proof}
\textbf{Lemma}.\ref{lm:trcyclic} is used in following derivation
\begin{align*}
&tr\{ 2 \mathbf{X}^{-1} \mathbf{Z} - 2 \mathbf{Y}^{-1} \mathbf{Z} - \mathbf{X}^{-1} \mathbf{Z} \mathbf{X}^{-1} \mathbf{Z} + \mathbf{Y}^{-1} \mathbf{Z} \mathbf{Y}^{-1} \mathbf{Z} \} \\
&~~~~- tr\{ (\mathbf{X}^{-1}-\mathbf{Y}^{-1}) \mathbf{Z} (\mathbf{X}^{-1}-\mathbf{Y}^{-1}) \mathbf{Z} \} \\
&= tr\{ 2 \mathbf{X}^{-1} \mathbf{Z} - 2 \mathbf{Y}^{-1} \mathbf{Z} - 2 \mathbf{X}^{-1} \mathbf{Z} \mathbf{X}^{-1} \mathbf{Z} \\
&\quad\qquad\qquad\qquad+ \mathbf{X}^{-1} \mathbf{Z} \mathbf{Y}^{-1} \mathbf{Z} + \mathbf{Y}^{-1} \mathbf{Z} \mathbf{X}^{-1} \mathbf{Z} \} \\
&= tr\{ 2 \mathbf{X}^{-1} \mathbf{Z} - 2 \mathbf{Y}^{-1} \mathbf{Z} - 2 \mathbf{X}^{-1} \mathbf{Z} \mathbf{X}^{-1} \mathbf{Z} + 2 \mathbf{X}^{-1} \mathbf{Z} \mathbf{Y}^{-1} \mathbf{Z} \} \\
&= 2~tr\{ (\mathbf{I}-\mathbf{X}^{-1} \mathbf{Z}) (\mathbf{X}^{-1}-\mathbf{Y}^{-1}) \mathbf{Z} \} \\
&= 2~tr\{\mathbf{Z} (\mathbf{I}-\mathbf{X}^{-1} \mathbf{Z}) (\mathbf{X}^{-1}-\mathbf{Y}^{-1} ) \} \\
&= 2~tr\{\mathbf{Z} (\mathbf{Z}^{-1}-\mathbf{X}^{-1}) \mathbf{Z} (\mathbf{X}^{-1}-\mathbf{Y}^{-1}) \}
\end{align*}
As $\mathbf{Z}^{-1}-\mathbf{X}^{-1} \geq 0$, we have
\begin{align*}
\mathbf{Z} (\mathbf{Z}^{-1}-\mathbf{X}^{-1}) \mathbf{Z} = \mathbf{Z}^T (\mathbf{Z}^{-1}-\mathbf{X}^{-1}) \mathbf{Z} \geq 0
\end{align*}
Besides, as $\mathbf{X}^{-1}-\mathbf{Y}^{-1} \geq 0$; following \textbf{Lemma}.\ref{lm:ineq1} we have $tr\{\mathbf{Z} (\mathbf{Z}^{-1}-\mathbf{X}^{-1}) \mathbf{Z} (\mathbf{X}^{-1}-\mathbf{Y}^{-1}) \} \geq 0$. The proof is done
\end{proof}

Note that $\mathbf{P}_1$, $\mathbf{P}_2$, $\mathbf{D}_1$, $\mathbf{D}_2$, and $\mathbf{P}_1+\mathbf{P}_2$ are symmetric matrices satisfying $\mathbf{P}_1+\mathbf{P}_2 > \mathbf{P}_1 = \mathbf{D}_1+\mathbf{P}_{1i} \geq \mathbf{D}_1 \geq 0$ and $\mathbf{P}_1+\mathbf{P}_2 > \mathbf{P}_2 = \mathbf{D}_2+\mathbf{P}_{2i} \geq \mathbf{D}_2 \geq 0$; following \textbf{Lemma}.\ref{lm:ineq2} we have (denote $\mathbf{P}_3 = \mathbf{P}_1+\mathbf{P}_2$)
\begin{align*}
\mathbf{T}_1 \geq tr\{ (\mathbf{P}_1^{-1}-\mathbf{P}_3^{-1}) \mathbf{D}_1 (\mathbf{P}_1^{-1}-\mathbf{P}_3^{-1}) \mathbf{D}_1 \} \\
\mathbf{T}_2 \geq tr\{ (\mathbf{P}_2^{-1}-\mathbf{P}_3^{-1}) \mathbf{D}_2 (\mathbf{P}_2^{-1}-\mathbf{P}_3^{-1}) \mathbf{D}_2 \}
\end{align*}

Substitute above inequalities into (\ref{eq:diflndetP-2}) and we have 
\begin{align} \label{eq:lndetPineq}
&\frac{d^2}{dw^2} \ln \det \mathbf{P} \geq tr\{ (\mathbf{P}_1^{-1}-\mathbf{P}_3^{-1}) \frac{\mathbf{D}_1}{w} (\mathbf{P}_1^{-1}-\mathbf{P}_3^{-1}) \frac{\mathbf{D}_1}{w} \} \nonumber \\
&\quad\qquad + tr\{ (\mathbf{P}_2^{-1}-\mathbf{P}_3^{-1}) \frac{\mathbf{D}_2}{1-w} (\mathbf{P}_2^{-1}-\mathbf{P}_3^{-1}) \frac{\mathbf{D}_2}{1-w} \} \nonumber \\
&\quad\qquad - 2~tr\{\mathbf{P}_3^{-1} \frac{\mathbf{D}_1}{w} \mathbf{P}_3^{-1} \frac{\mathbf{D}_2}{1-w}\}
\end{align}
Denote $\mathbf{B}_3 = \mathbf{P}_1^{-1}+\mathbf{P}_2^{-1}$. Note that
\begin{align*}
\mathbf{P}_3^{-1} &= (\mathbf{P}_1+\mathbf{P}_2)^{-1} = (\mathbf{P}_1 (\mathbf{P}_1^{-1}+\mathbf{P}_2^{-1}) \mathbf{P}_2)^{-1} \\
&= \mathbf{P}_2^{-1} (\mathbf{P}_1^{-1}+\mathbf{P}_2^{-1})^{-1} \mathbf{P}_1^{-1} \\
&= \mathbf{P}_2^{-1} \mathbf{B}_3^{-1} \mathbf{P}_1^{-1} \\
\textnormal{or} \quad \mathbf{P}_3^{-1} &= (\mathbf{P}_2 (\mathbf{P}_1^{-1}+\mathbf{P}_2^{-1}) \mathbf{P}_1)^{-1} = \mathbf{P}_1^{-1} \mathbf{B}_3^{-1} \mathbf{P}_2^{-1}
\end{align*}
We have
\begin{align*}
\mathbf{P}_1^{-1}-\mathbf{P}_3^{-1} &= \mathbf{P}_1^{-1} - \mathbf{P}_2^{-1} (\mathbf{P}_1^{-1}+\mathbf{P}_2^{-1})^{-1} \mathbf{P}_1^{-1} \\
&= ((\mathbf{P}_1^{-1}+\mathbf{P}_2^{-1})  - \mathbf{P}_2^{-1}) (\mathbf{P}_1^{-1}+\mathbf{P}_2^{-1})^{-1} \mathbf{P}_1^{-1} \\
&= \mathbf{P}_1^{-1} (\mathbf{P}_1^{-1}+\mathbf{P}_2^{-1})^{-1} \mathbf{P}_1^{-1} \\
&= \mathbf{P}_1^{-1} \mathbf{B}_3^{-1} \mathbf{P}_1^{-1}
\end{align*}
Similarly we have
\begin{align*}
\mathbf{P}_2^{-1}-\mathbf{P}_3^{-1} = \mathbf{P}_2^{-1} \mathbf{B}_3^{-1} \mathbf{P}_2^{-1}
\end{align*}
Therefore, (\ref{eq:lndetPineq}) becomes
\begin{align} \label{eq:lndetPineq-2}
&\frac{d^2}{dw^2} \ln \det \mathbf{P} \nonumber \\
&\quad \geq tr\{ \mathbf{P}_1^{-1} \mathbf{B}_3^{-1} \mathbf{P}_1^{-1} \frac{\mathbf{D}_1}{w} \mathbf{P}_1^{-1} \mathbf{B}_3^{-1} \mathbf{P}_1^{-1} \frac{\mathbf{D}_1}{w} \} \nonumber \nonumber \\
&\qquad + tr\{ \mathbf{P}_2^{-1} \mathbf{B}_3^{-1} \mathbf{P}_2^{-1} \frac{\mathbf{D}_2}{1-w} \mathbf{P}_2^{-1} \mathbf{B}_3^{-1} \mathbf{P}_2^{-1} \frac{\mathbf{D}_2}{1-w} \} \nonumber \nonumber \\
&\qquad - 2~tr\{\mathbf{P}_2^{-1} \mathbf{B}_3^{-1} \mathbf{P}_1^{-1} \frac{\mathbf{D}_1}{w} \mathbf{P}_1^{-1} \mathbf{B}_3^{-1} \mathbf{P}_2^{-1} \frac{\mathbf{D}_2}{1-w}\} \nonumber \\
&\quad = tr\{ \mathbf{B}_3^{-1} \mathbf{P}_1^{-1} \frac{\mathbf{D}_1}{w} \mathbf{P}_1^{-1} \mathbf{B}_3^{-1} \mathbf{P}_1^{-1} \frac{\mathbf{D}_1}{w} \mathbf{P}_1^{-1} \} \nonumber \nonumber \\
&\qquad + tr\{ \mathbf{B}_3^{-1} \mathbf{P}_2^{-1} \frac{\mathbf{D}_2}{1-w} \mathbf{P}_2^{-1} \mathbf{B}_3^{-1} \mathbf{P}_2^{-1} \frac{\mathbf{D}_2}{1-w} \mathbf{P}_2^{-1} \} \nonumber \nonumber \\
&\qquad - 2~tr\{ \mathbf{B}_3^{-1} \mathbf{P}_1^{-1} \frac{\mathbf{D}_1}{w} \mathbf{P}_1^{-1} \mathbf{B}_3^{-1} \mathbf{P}_2^{-1} \frac{\mathbf{D}_2}{1-w} \mathbf{P}_2^{-1} \} \nonumber \\
&\quad = tr\{ \mathbf{B}_3^{-1} \mathbf{C} \mathbf{B}_3^{-1} \mathbf{C} \}
\end{align}
where
\begin{align*}
\mathbf{C} = \mathbf{P}_1^{-1} \frac{\mathbf{D}_1}{w} \mathbf{P}_1^{-1} - \mathbf{P}_2^{-1} \frac{\mathbf{D}_2}{1-w} \mathbf{P}_2^{-1}
\end{align*}
As matrices $\mathbf{P}_1$, $\mathbf{P}_2$, $\mathbf{D}_1$, and $\mathbf{D}_2$ are all symmetric, so is $\mathbf{C}$. Note that $\mathbf{B}_3 = \mathbf{P}_1^{-1}+\mathbf{P}_2^{-1} > 0$ ($\mathbf{B}_3$ is symmetric as well) and hence $\mathbf{B}_3^{-1} > 0$, we have
\begin{align*}
\mathbf{C} \mathbf{B}_3^{-1} \mathbf{C} = \mathbf{C}^T \mathbf{B}_3^{-1} \mathbf{C} \geq 0
\end{align*}
Follow (\ref{eq:lndetPineq-2}) and \textbf{Lemma}.\ref{lm:ineq1} and we have
\begin{align*}
\frac{d^2}{dw^2} \ln \det \mathbf{P} \geq tr\{ \mathbf{B}_3^{-1} \mathbf{C} \mathbf{B}_3^{-1} \mathbf{C} \} \geq 0
\end{align*}
So all the proof for (\ref{eq:convexC2}) is presented. As we have already explained at the beginning of this section, (\ref{eq:convexC1}) also holds true and the convexity of the $w$-optimization problem is proved.

\section{Conclusion}

Explanation on an indispensable optimization step (i.e. the $w$-optimization problem) involved in the split CIF is neglected in \cite{Li2013a}, this note complements \cite{Li2013a} by providing a theoretical proof with details for the convexity of the $w$-optimization problem. As convexity facilitates optimization considerably, readers can resort to convex optimization techniques to solve the $w$-optimization problem when they intend to incorporate the split CIF into their prospective research works.

\section*{Appendix}

Demo code: https://github.com/LI-Hao-SJTU/SplitCIF 

\ifCLASSOPTIONcaptionsoff
  \newpage
\fi
\bibliographystyle{IEEEtran}
\bibliography{LI_Ref}

\end{document}